\acrodef{GNC}[GNC]{Graduated-Nonconvexity}
\acrodef{WLS}[WLS]{Weighted Least Squares}
\acrodef{TLS}[TLS]{Truncated Least Squares}
\acrodef{SVD}[SVD]{Singular Value Decomposition}
\acrodef{VO}[VO]{Visual Odometry}
\acrodef{VIO}[VIO]{Visual-Inertial Odometry}
\acrodef{SDF}[SDF]{Signed Distance Field}
\acrodef{SDFs}[SDFs]{Signed Distance Fields}
\acrodef{ESDF}[ESDF]{Euclidean Signed Distance Field}
\acrodef{CESDF}[C-ESDF]{Certified \ac{ESDF}}
\acrodef{TSDF}[TSDF]{Truncated Signed Distance Field}
\acrodef{RGBD}[RGBD]{RGB-Depth}
\acrodef{IMU}[IMU]{Inertial Measurement Unit}
\acrodef{EKF}[EKF]{Extended Kalman Filter}
\acrodef{VO}[VO]{Visual Odometry}
\acrodef{CVO}[C-VO]{Certified Visual Odometry}
\acrodef{ISS}[ISS]{Input-to-State}
\acrodef{SLAM}[SLAM]{Simultaneous Localization and Mapping}
\acrodef{SFC}[SFC]{Safe Flight Corridors}
\acrodef{FOV}[FoV]{field of view}
\acrodef{UAV}[UAV]{}
\acrodef{FOO}[FOO]{first-order optimality}
\acrodef{DOF}[DoF]{degree of freedom}
\acrodef{VTOL}[VTOL]{vertical takeoff and landing}
\newcommand{\reals}{\mathbb{R}}
\newcommand{\R}{\reals}
\newcommand{\SO}{\mathbb{SO}}
\newcommand{\Lcal}{\mathcal{L}}
\newcommand{\Ucal}{\mathcal{U}}
\newcommand{\Wcal}{\mathcal{W}}
\newcommand{\Xcal}{\mathcal{X}}
\newcommand{\Zcal}{\mathcal{Z}}
\newcommand{\Lu}{\Lcal_u}
\newcommand{\Ll}{\Lcal_\lambda}
\newcommand{\Llx}{\Lcal_{\lambda_x}}
\newcommand{\Llz}{\Lcal_{\lambda_z}}
\newcommand{\Lul}{\begin{bmatrix}\Lu\\ \Ll\end{bmatrix}}
\newcommand{\od}{\operatorname{d}}
\newcommand{\eqn}[1]{\begin{align} #1 \end{align}}
\newcommand{\eqnN}[1]{\begin{align*} #1 \end{align*}}
\newcommand{\Aeqn}[1]{\begin{equation} \begin{aligned} #1 \end{aligned} \end{equation}}
\newcommand{\bmat}[1]{\begin{bmatrix}#1\end{bmatrix}}
\newcommand{\norm}[1]{\left\Vert #1 \right \Vert}
\theoremstyle{plain}
\newtheorem{theorem}{Theorem}
\newtheorem{corollary}{Corollary}
\newtheorem{problem}{Problem}
\newtheorem{definition}{Definition}
\theoremstyle{definition}
\newtheorem{assumption}{Assumption}
\DeclareMathOperator{\sgn}{sgn}
\theoremstyle{remark}
\newtheorem{example}{Example}
\let\NAT@parse\undefined
\title{\LARGE \bf Adaptive Control Allocation for Underactuated \\ Time-Scale Separated Non-Affine Systems }
\author{Daniel M. Cherenson and Dimitra Panagou 
\thanks{This research was supported by the Center for Autonomous Air Mobility and Sensing (CAAMS), an NSF IUCRC, under Award Number 2137195, and an NSF CAREER under Award Number 1942907.}
\thanks{All authors are with the Robotics Department, University of Michigan, Ann Arbor, MI, USA
        {\tt \{dmrc, dpanagou\}@umich.edu}}%
}
\author{Author Names Omitted for Anonymous Review. Paper-ID}
\thanks{$^{*}$[redacted]}
\begin{document}

\maketitle

\begin{abstract}
Many robotic systems are underactuated, meaning not all degrees of freedom can be directly controlled due to lack of actuators, input constraints, or state-dependent actuation. This property, compounded by modeling uncertainties and disturbances, complicates the control design process for trajectory tracking. In this work, we propose an adaptive control architecture for uncertain, nonlinear, underactuated systems with input constraints. Leveraging time-scale separation, we construct a reduced-order model where fast dynamics provide virtual inputs to the slower subsystem and use dynamic control allocation to select the optimal control inputs given the non-affine dynamics. To handle uncertainty, we introduce a state predictor-based adaptive law, and through singular perturbation theory and Lyapunov analysis, we prove stability and bounded tracking of reference trajectories. The proposed method is validated on a \acf{VTOL} quadplane with nonlinear, state-dependent actuation, demonstrating its utility as a unified controller across various flight regimes, including cruise, landing transition, and hover.
\end{abstract}

\section{Introduction}
\label{sec:introduction}

Many robotic systems, including aerial robots~\cite{wu2023geometric}, underwater robots~\cite{lei2024disturbance}, and manipulators~\cite{pucci2015collocated} are underactuated. In this paper, we consider a broad class of uncertain underactuated systems where the mapping from control inputs to \acp{DOF} is not surjective, so not every \ac{DOF} can be directly commanded~\cite{underactuated}. Underactuation occurs when a system has fewer independent controls than \acp{DOF}, but is also caused by input constraints and state-dependent actuation. Because not every \ac{DOF} can be directly commanded, these systems cannot in general follow arbitrary trajectories which complicates control design in the presence of modeling uncertainty and environmental disturbances.

Some systems exhibit time-scale separation, meaning a subset of the system states evolve more quickly than the rest. Exploiting this property allows the use of reduced-order models, in which the fast subsystem reference is treated as a virtual input to the slow system, simplifying the overall control problem by augmenting the control input space. Singular perturbation theory provides formal treatment and analysis of time-scale separated control systems~\cite{kokotovic1999singular}.

Adaptive control provides a framework for handling uncertainties in dynamical systems by adjusting controller parameters online to maintain stability and tracking performance without requiring exact knowledge of the system~\cite{lavretsky2024robust}. While adaptive methods have been successfully applied to certain underactuated systems, much of the existing work has focused on control-affine manipulators with relatively simple actuation structures or without input constraints~\cite{pucci2015collocated,yang2021adaptive,sun2022singular}. Adaptive methods for non-affine systems with uncertainties in how the control inputs enter the dynamics have been considered~\cite{hovakimyan2008adaptive,yang2007adaptive}, but only for fully-actuated systems. Time-scale separation has been applied to adaptive control for non-minimum phase systems~\cite{eves2023aadaptive} and was extended to general singularly-perturbed systems in~\cite{eves2023badaptive}, but the results are derived for fully-actuated systems. The quadrotor is a well-studied underactuated platform with much adaptive control research~\cite{o2022neural,wu2025boldsymbol}, but these works take advantage of the simple structure of the quadrotor dynamics, limiting the application to more challenging non-affine systems with state-dependent actuation. The combination of underactuation, non-affine-in-control dynamics, and time-scale separation in adaptive control of a general class of systems remains largely unexplored.



In this paper, we propose an adaptive control architecture for uncertain, underactuated nonlinear systems that uses a time-scale separated control strategy. Our contributions are as follows: 
1) We design an adaptive control architecture for uncertain, non-affine in control, underactuated nonlinear systems with redundant, constrained control inputs using a time-scale separation property and dynamic control allocation, 2) We prove stability and bounded tracking error of a reference trajectory using singular perturbation theory and Lyapunov analysis, and 3) We demonstrate the algorithm on a \acf{VTOL} quadplane with highly nonlinear dynamics and state-dependent actuation, i.e., the system is underactuated in level flight and in hover and is overactuated during the landing transition phase.


\section{Preliminaries}
\label{sec:prelim}

In this paper, we form a closed-loop three-time-scale model for the control of an uncertain underactuated system:
\begin{subequations}  
\vspace{-13pt}
\eqn{\label{eq:intro_slow}\dot x &= f(t,x,z,u,\epsilon), \quad x(0) = \xi(\epsilon) \\ \label{eq:intro_fast}\epsilon \dot z &= g(t,x,z,u,\epsilon), \quad z(0) = \eta(\epsilon) \\ \label{eq:intro_fastest}\epsilon^2 \dot u &= p(t,x,z,u,\epsilon), \quad u(0) = \upsilon(\epsilon)}
\end{subequations}
where $0 < \epsilon \ll 1$ is a parameter that represents the time-scale separation between the three subsystems. That is, $u$ evolves at a rate $\epsilon$ faster than $z$, which evolves at a rate $\epsilon$ faster than $x$. The initial conditions $\xi(\epsilon)$, $ \eta(\epsilon)$, and $\upsilon(\epsilon)$ are smooth functions of $\epsilon$. The fast time scale is $t' = \frac{t}{\epsilon}$ and the fastest time scale is $t'' = \frac{t'}{\epsilon} = \frac{t}{\epsilon^2}$. The system is in standard form if the algebraic equation $0 = p(t,x,z,u,0)$ has at least one isolated real root $u = q_i(t,x,z)$. We drop the subscript $i$ and only refer to a single root. Moreover, the algebraic equation $0 = g(t,x,z,q(t,x,z),0)$ has at least one isolated real root $z = h_j(t,x)$, and again we drop the subscript $j$.

The three-time-scale model is equivalent to two nested two-time-scale models. In the fastest time-scale $t''$, let $v = u - q(t,x,z)$. The boundary layer subsystem is given by $\frac{\od}{\od t''}v = p(t,x,z,v+q(t,x,z),0)$. Similarly, let $y = z - h(t,x)$. The boundary layer in the fast time-scale $t'$ is given by $\frac{\od}{\od t'}y = g(t,x,y + h(t,x),q(t,x,y+h(t,x)),0)$. The reduced system is the result of substituting the roots $h$ and $q$ into \eqref{eq:intro_slow}, yielding $\dot {\bar x} = f(t,x,h(t,x),q(t,x,h(t,x)),0)$. By applying Tikhonov's Theorem \cite[Theorem 11.2]{khalil2002nonlinear}, we can show that the reduced slow subsystem $\bar x(t)$ is a uniform approximation to the actual $\epsilon$-dependent solution to \eqref{eq:intro_slow}, $x(t,\epsilon)$. That is, $x(t,\epsilon) - \bar x(t) = O(\epsilon)$, where a vector function $f(t, \epsilon) \in \R^n$ is said to be $O(\epsilon)$ over the interval $[t_1,\infty)$ if there exist positive constants $k$ and $\epsilon^*$ such that $\norm{f(t, \epsilon)} \le k\epsilon \quad \forall \epsilon \in [0, \epsilon^*], ~ \forall t \in [t_1, \infty)$.

\section{Problem Formulation}
\label{sec:prob}
Consider a MIMO nonlinear system in controllable canonical form with respect to a control input function $\tau(x,z,u)$. Note that $\tau$ is \emph{not} the actual control input, which we denote as $u$. The considered class of systems is of the form:
\begin{subequations} \label{eq:sys}
\eqn{ \dot x = A_x x + B_x \tau_x(x,z,u) \\ \dot z = A_z z + B_z \tau_z(x,z,u)}
\end{subequations}
where $x \in \Xcal \subset \R^{dn_x}$ is the slow state vector, $z \in \Zcal \subset R^{dn_z}$ is the fast state vector, and $u \in \Ucal \subset \R^m$ is the control input. $d$ is the relative degree, which is uniform for $x$ and $z$. The control input function $\tau_x : \Xcal \times \Zcal \times \Ucal \to \R^{n_x}$ is a nonlinear function of the control input $u$ with $n_x$ components corresponding to $n_x$ \acfp{DOF} of the state $x$. The same applies to $\tau_z : \Xcal \times \Zcal \times \Ucal$ with $n_z$ \acp{DOF}. We define $A_x = A \otimes I_{n_x}$ and $A_z = A \otimes I_{n_z}$, where $\otimes$ is the Kronecker product of two matrices. Similarly, $B_x = B \otimes I_{n_x}$ and $B_z = B \otimes I_{n_z}$. $A$ and $B$ are defined as
 \eqn{A = \bmat{
0 & 1 & \cdots & 0 & 0 \\
\vdots & \vdots & \ddots & \vdots & \vdots \\
0 & 0 & \cdots & 0 & 1 \\
0 & 0 & \cdots & 0 & 0
} \in \R^{d\times d}, 
\quad
B = \bmat{
0 \\
\vdots \\
0 \\
1} \in \R^d.}
We assume that the full states $x$ and $z$ are measured. Input constraints are encoded by the polytope $\Ucal = \{u : Cu \le c\}, C \in \R^{n_c \times m}, c \in \R^{n_c}$ where $n_c>0$ is the number of individual input constraints. We further partition $z$ into $d$ components, denoted $z = \bmat{z_1^\top \cdots z_d^\top}^\top$, where each $z_i \in \R^{n_z}$, $i\in\{1,2,\dots,d\}$, and we will focus on $z_1$, which will become a virtual control input to the slow system.
The combined state dynamics are $\dot\chi = A_\chi \chi + B_\chi \tau(x,z,u),$ where $A_\chi \in \R^{d(n_x + n_z)\times d(n_x+n_z)}$ is a block diagonal matrix of $A_x$ and $A_z$, and the same applies for $B_\chi \in \R^{d(n_x+n_z)\times n_x+n_z}$.


The slow-fast state partition aids in the control design process, where singular perturbation theory confirms that treating the fast state as a control input to the slow state \acp{DOF} is valid. We further write the combined control input function as ${\tau(x,z,u) = [\tau_x(x,z,u)^\top,\tau_z(x,z,u)^\top]^\top.}$ We approximate the function $\tau : \Xcal \times \Zcal \times \Ucal \to \R^{n_x+n_z}$ as the sum of a known component $\tau_0$ and a linearly-parameterized uncertainty $\phi(x,z,u)W$, where $\phi : \Xcal \times \Zcal \times \Ucal \to \R^w$ is a vector of known locally Lipschitz nonlinear basis functions, and $W \in \Wcal \subset \R^w$ is a constant vector of \textit{unknown} parameters belonging to a known compact set $\Wcal$. The resulting form of $\tau$ is: \eqn{\label{eq:tau}\tau(x,z,u) := \tau_0(x,z,u) + \phi(x,z,u)W + \delta(x,z,u),}
where $\delta(x,z,u) \in R^{n_x+n_z}$ is a non-parametric uncertainty. Although $\delta$ is not estimated, we assume a known bound, denoted as $\delta_{\max} = \sup_{x,z,u}\norm{\delta(x,z,u)}$.

In our proposed approach, we design an estimator for $W$ and we use the estimated parameters $\widehat W$ in the controller to represent the current estimated $\tau$, denoted $\hat \tau$ for brevity: \eqn{\hat \tau = \hat\tau(x,z,u;\widehat W) := \tau_0(x,z,u) + \phi(x,z,u)\widehat W.} Note that we do not maintain an estimate of $\delta$ or use the bound $\delta_{\max}$ in our control design. We only use it for robustness analysis. The error between $\tau$ and $\hat \tau$ is \eqn{\tilde \tau := \tau - \hat \tau = \phi(x,z,u)\widetilde W + \delta(x,z,u),} where $\widetilde W = W - \widehat W$ is the parameter estimation error.

The focus of this paper is on underactuated systems. 
\begin{definition}[Underactuated]\footnote{Trivially underactuated systems are those with fewer control inputs than degrees of freedom, i.e., $m < n_x+n_z$. However, even if $m = n_x + n_z$, some actuators may be redundant and there may be input constraints that limit the domain of the control input, $\Ucal$, causing underactuation. See \Cref{example} for a simple input-constrained underactuated system.}
The system \eqref{eq:sys} is underactuated if there exists a commanded output $\tau^c \in \R^{n_x+n_z}$ such that $\forall x \in \Xcal, z \in \Zcal, \widehat W \in \Wcal, u \in \Ucal$, $\hat\tau(x,z,u) \ne \tau^c$. That is, $\hat\tau(x,z,\cdot) : \Ucal \to \R^{n_x+n_z}$ is not surjective.
\end{definition}


Let $x^r(t)$ and $z^r(t)$ be time-varying, continuously-differentiable reference trajectories generated by a reference model or computed by a trajectory planner. In the following section, the proposed controller is designed so that the slow state $x$ tracks the reference trajectory $x^r(t)$ at all times, while the fast state $z$ tracks the reference trajectory $z^r(t)$ whenever possible. To track the desired trajectory, we design an adaptive control architecture to estimate the parameters $\widehat W$, which affects the selection of the control input $u$. Due to the uncertainty $\delta$ and the time-scale separation assumption, only bounded tracking error is guaranteed. We arrive at the overall problem statement:

\begin{problem}
\label{prob}
  Design a parameter adaptation law $\pi_W$ and a feedback controller $\pi_c$ that selects the control input $u$ for the underactuated system \eqref{eq:sys} such that the slow state tracking error $\norm{x(t)-x^r(t)}$ remains bounded $\forall t\geq 0$.
\end{problem}


\section{Method}
\label{sec:method}
\begin{figure}
  \label{fig:block_diag}
  \centering
  \includegraphics[width=\linewidth]{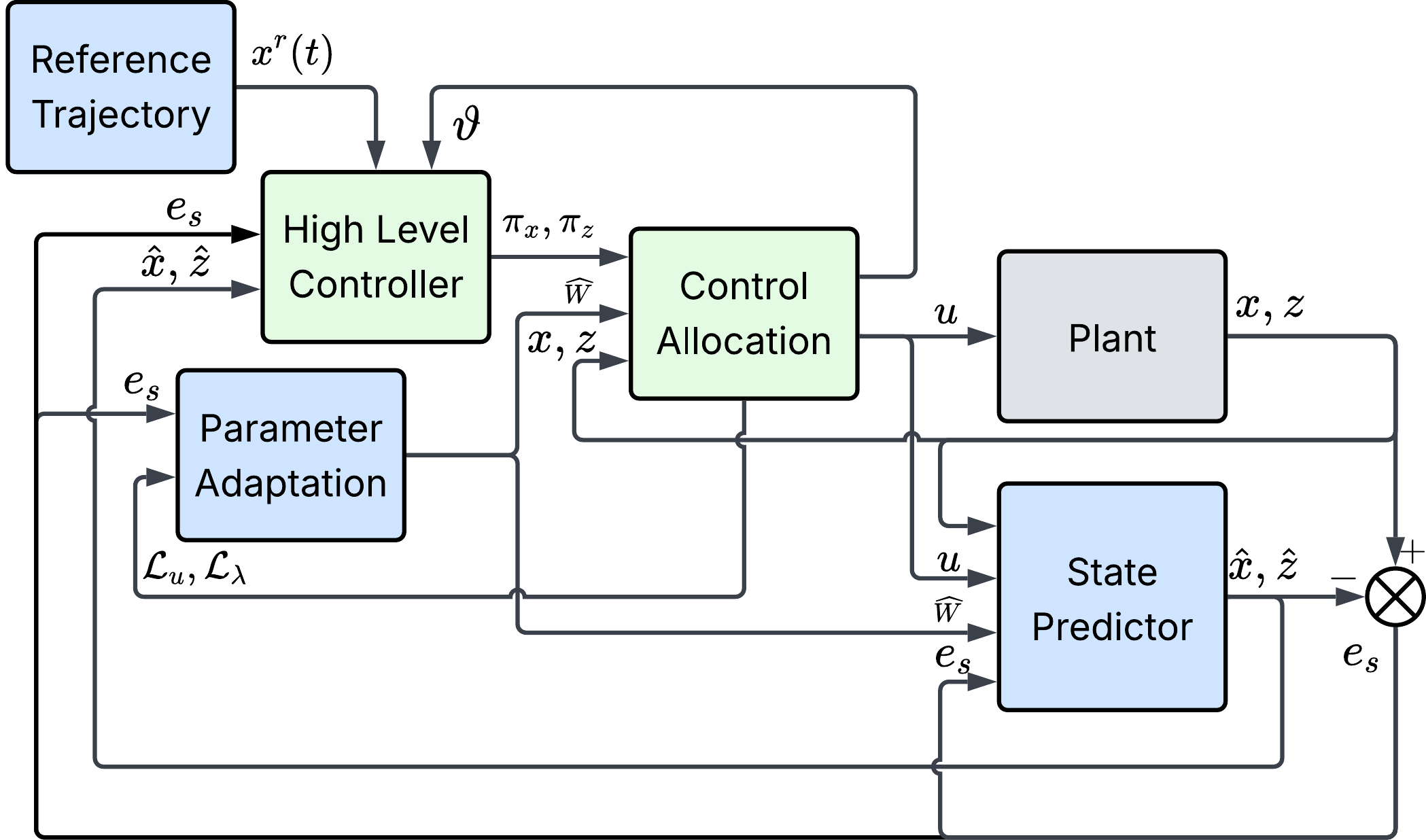}
  \caption{Block diagram of the proposed adaptive control architecture. The gray blocks are user-specified. The blue and green blocks are detailed in \Cref{sec:method}, and our specific contributions appear in the green blocks.}
  \vspace{-15pt}
\end{figure}

Tracking arbitrary trajectories for all \acp{DOF} is impossible because \eqref{eq:sys} is underactuated. Using a reduced-order model facilitates control design, since the fast state $z$ can be treated as a control input to the slow state $x$. 
To establish stability and bounded tracking, we apply singular perturbation theory, which requires exponential stability of the tracking error system. This is difficult to show with an adaptive controller, so instead, we introduce a state predictor and modify the adaptation law for $\widehat W$ to reduce prediction error. A high-level controller reduces the error between the predictor and the reference trajectory, while the control allocation algorithm selects $u$ to realize the high-level command. Ultimately, the error between the slow state $x$ and the reference $x^r(t)$ is shown to be bounded.

To cast the closed-loop system in the form of \cref{eq:intro_slow,eq:intro_fast,eq:intro_fastest}, we employ a non-traditional adaptive architecture. Exponential stability of $x(t)-x^r(t)$ cannot be proven directly due to non-parametric uncertainty $\delta$ and lack of persistence of excitation. Instead, by controlling predictors $\hat x$ and $\hat z$ rather than the true states, we remove the uncertainty from the closed-loop dynamics. Inspired by~\cite{hovakimyan2008adaptive}, our approach has two steps: (i) adapt parameters to reduce $\hat x(t)-x(t)$, and (ii) apply high-level feedback to reduce $\hat x(t)-x^r(t)$.

Our contributions to the architecture in \Cref{fig:block_diag} are the high-level controller and control allocation blocks, along with theoretical guarantees of stability and bounded tracking.

\subsubsection{Reduced System Reformulation}

We seek to write the control function $\tau$ in a form that makes it surjective with respect to an augmented control input $\bar u$. Here, we define $\bar u$ as the original control input $u$ augmented by a command for the fast state $z$ such that the new form of $\tau$ is surjective. This step relies on the assumption that the fast state can be controlled much more quickly than the slow state. We apply singular perturbation theory to the final closed-loop system to verify that our approach is valid and gives the desired stability and boundedness properties.

The following example of a planar quadrotor shows that although the combined state model is underactuated due to input constraints, the partitioning of the slow and fast states results in a fully-actuated reduced system.

\begin{example}[Planar Quadrotor with Pusher Propeller]\label{example} ~ \\ Consider a simplified planar quadrotor with two vertical propellers $u_1,u_2$ and one lateral pusher propeller $u_3$. The slow state $x = \bmat{p_x & v_x & p_y & v_y}^\top$ and fast state $z = \bmat{\theta & \omega}^\top$. The simplified control function $\tau$ is \eqn{\tau(x,z,u) = \bmat{R(\theta)\bmat{u_3 \\-u_1-u_2} + \bmat{0 \\ 1} \\ u_1-u_2},} where $R(\theta) \in \SO(2)$. The map $\tau(x,z,\cdot)$ is not surjective because of the input constraints. In a hover where $x = 0, z = 0$, there does not exist a $u$ such that $\tau = \bmat{-0.1, 0, 0}^\top$ because $u_3$ cannot be negative, as the pusher propeller can only produce force in one direction. If, however, we assume that the fast subsystem settles to the command $z^c = \bmat{\theta^c & 0}^\top$ instantaneously, where $\theta^c$ is a virtual control input, then there exists an augmented control $\bar u = \bmat{u^\top & \theta^c}^\top$ that achieves the desired $\tau$. The virtual control input $\theta$ is the reference signal to the fast dynamics, which must be sufficiently fast for the reduced model assumption to be valid.
\end{example}

Formalizing this notion, let $\vartheta \in \Theta \subset \R^{n_z}$ be a vector of fast system commands. We treat the fast state $z$ as a virtual input to the slow state dynamics by assuming the fast state instantaneously achieves its command $z^c = \bmat{\vartheta^\top & 0 & \cdots & 0}^\top$. The reduced-order system model is $\dot{ x} = A_x x + B_x\hat\tau_x(x,z^c,u)$. We now define the augmented control input $\bar u = \bmat{u^\top \vartheta^\top}^\top \in \bar \Ucal := \Ucal \times \Theta = \{\bar u : \bar C \bar u \le \bar c\}$ where $\bar C$ and $\bar c$ encode the augmented polytopic input constraints on the virtual input. The reduced control function is \eqn{\bar\tau_x(x,\bar u) := \hat\tau_x(x,z^c,u).} It is required that the fast state $z$ be fully-actuated to track the command $\vartheta$. Formally, we assume the following:

\begin{assumption}
  $\bar\tau_x(x,\cdot) : \bar \Ucal \to \R^{n_x}$ and $\hat\tau_z(x,z,\cdot) : \Ucal \to \R^{n_z}$ are surjective $\forall x \in \Xcal, z \in \Zcal, \widehat W \in \Wcal$.
\end{assumption}

We will use this reduced system for control design of the high-level controller of the slow system.

\subsubsection{State Predictor}

We introduce state predictors $\hat x$ and $\hat z$ to place the relevant time-scale separated dynamics into a form to which singular perturbation theory applies. \Cref{fig:block_diag} depicts the state predictor below the plant dynamics because they have similar inputs $u$ and outputs $x/\hat x$ and $z/\hat z$. 
The slow state predictor $\hat x$ is a series-parallel model defined by the following update law: \eqn{\label{eq:xhat}\dot{\hat x} = A_xx + B_x\hat\tau_x(x,z,u) - A_{s_x} (x - \hat x),} where $A_{s_x}$ is a Hurwitz matrix in the form \eqn{A_{s,x} =  \bmat{
0 & I_{n_x} & \cdots & 0 \\
\vdots & \vdots & \ddots & \vdots \\
0 & 0 & \cdots & I_{n_x} \\ \multicolumn{4}{c}{[-K_{s_x}]}
},}
where $K_{s_x} = \bmat{k_{s_x,1} & k_{s_x,2} & \cdots & k_{s_x,d}} \otimes I_{n_x}$ with each element non-negative. The $\hat z$ update law is defined as: \eqn{\label{eq:zhat}\dot{\hat z} = A_z z + B_z\hat\tau_z(x,z,u) - A_{s_z}e_{s_z}.}

The combined predicted state is $\hat \chi = [\hat x^\top \hat z^\top]^\top$, and we define $e_s = \chi - \hat \chi$. The dynamics of $e_s$ are \eqn{\label{eq:es_dot}\dot e_{s} = A_{s} e_{s} + B\Phi(x,z,u)\widetilde W + \delta(x,z,u),} where $A_s = {\rm diag}(A_{s_x},A_{s_z})$. Next, we introduce the error between the reference trajectory $x^r(t)$ and the predicted $\hat x$. The dynamics of the reference trajectory are given by the linear system $\dot x^r = A_xx^r + B_x r(t)$, where $r(t) \in \R^{n_x}$ is the derivative of the $d$-th component of each \ac{DOF} of $x^r$.
The tracking error dynamics are \eqnN{\dot e_{x} &= A_x x + B_x\hat\tau_x(x,z,u) - A_{s_x} e_{s_x}- A_xx^r - B_x r(t) \\& = A_x e_x + B_x(\hat\tau_x(x,z,u) - r(t) - K_{s_x} e_{s_x}).} 

\subsubsection{High-Level Controller}
We now design a high-level controller for the slow states, $\pi_x(t,x,\hat x) : \R \times \Xcal \times \Xcal \to \R^{n_x}$, and for the fast states, $\pi_z(\vartheta,z,\hat z) : \Theta \times \Zcal \times \Zcal \to \R^{n_z}$. In \Cref{fig:block_diag}, the high-level controller lies between the reference trajectory and control allocation blocks.

The goal is to choose the control input $u$ and fast state command $\vartheta$ such that the dynamics of $e_x$ are exponentially stable in the form $\dot e_x = A_{r_x}e_x$, where $A_{r_x}$ is of the same form as $A_{s_x}$ and $K_{s_x}$ is replaced by $K_{r_x} = \bmat{k_{r_x,1} & k_{r_x,2} & \cdots & k_{r_x,d}} > 0$ element-wise. 


The resulting slow state high-level control law $\pi_x$ is \eqn{\label{eq:slow_control} \pi_x(t,\hat x, x) = r(t)-K_{s_x}(x - \hat x) - K_{r_x} (\hat x - x^r(t)).}

The reduced-order model assumption requires that $z_1 = \vartheta$. The desired equilibrium of the fast state predictor must then be $\hat z_1 = \vartheta - e_{s_z,1}$. Hence, the desired dynamics of $\hat z$ are $\dot{\hat z} = A_{r_z}\hat z + B_zk_{r_z,1}(\vartheta - e_{s_z,1}),$ where $A_{r_z}$ is the fast state reference matrix, which has the same structure as $A_{r_x}$ with the last $n_z$ rows as $-K_{r_z}$, and $k_{r_z,1}$ is the first element of $K_{r_z}$. Here, we introduce the parameter $\epsilon$
as the time-scale separation between the slow and fast reference systems. In the next section, we use this specific form of $A_{r_z}$ to put the fast system into singularly-perturbed standard form.
The fast reference matrix is defined as \eqn{A_{r_z} =\frac{1}{\epsilon}T(\epsilon)^{-1}\bar A_{r_x}T(\epsilon),} where $\bar A_{r_x} \in \R^{dn_z \times dn_z}$ has the same entries as $A_{r_x}$ and $T(\epsilon) = {\rm diag}(1,\epsilon, \ldots, \epsilon^{d-1}) \otimes I_{n_z}$, which results in $K_{r_z} = \bmat{\frac{1}{\epsilon^d}k_{r_x,1} & \frac{1}{\epsilon^{d-1}}k_{r_x,2} & \cdots & \frac{1}{\epsilon}k_{r_x,d}} \otimes I_{n_z}.$ 

The resulting fast state feedback control constraint is then \eqn{\label{eq:fast_control} \pi_z(z,\hat z,\vartheta) = k_{r_z,1}(\vartheta -(z_1 - \hat z_1))-K_{s_z}(z - \hat z) - K_{r_z} \hat z.}

With the high-level controllers $\pi_x$ and $\pi_z$ defined, the next step is to select the control input $u$ and virtual input $\vartheta$ that achieves $\hat\tau_x = \pi_x$ and $\hat\tau_z = \pi_z$.

\subsubsection{Control Allocation and Parameter Adaptation}

The reduced system may be overactuated due to the additional virtual control inputs, i.e., the $\bar u$ that achieves the desired $\hat\tau_x = \pi_x$ and $\hat\tau_z = \pi_z$ is not unique, which results in a control allocation problem. In \Cref{fig:block_diag}, the control allocation block acts as an inner loop controller, taking commands from the high-level controller and producing control inputs. The following time-varying control allocation problem returns the optimal $\bar u^*$ that minimizes the cost $J$ achieves the high-level control commands:
 \begin{subequations}
 \label{eq:control_alloc}
\begin{align}
  \min_{\bar u\in\bar\Ucal} \quad & J(t,x, z,\bar u) \\ \text{s.t.} \quad & \pi_x(t,x,\hat x) = \bar\tau_x(x, \bar u) \\ &\pi_z(\vartheta, z, \hat z) = \hat\tau_z(x, z, u).
\end{align}
\end{subequations}
In addition to penalizing the real control inputs $u$, the deviation between the external reference $z_1^r(t)$ and the optimal command $\vartheta$ is penalized, and is augmented with log-barrier functions to encode the input constraints $\bar u \in \Ucal$ as $J(t,x,z,\bar u) = J_u(x,z,u) + \alpha(\vartheta - z_1^r(t))^2 - \beta\sum_i \log(-C_i \bar u + c_i)$, where $\alpha > 0$, $0<\beta \ll 1 $, $\bar C_i$ is the $i$-th row of $\bar C$ and $\bar c_i$ is the $i$-th element of $\bar c$.

The Lagrangian of \eqref{eq:control_alloc} is $\Lcal(t,x,\hat x,z,\hat z,\bar u, \lambda; \widehat W) = J(t,x,z,\bar u) + \lambda^\top(\pi(t,x,\hat x,z,\hat z,\vartheta)-\hat\tau(x,z,\bar u;\widehat W)),$ where $\lambda \in \R^{n_x+n_z}$ is a vector of Lagrange multipliers corresponding to the $n_x + n_z$ constraints and $\pi = \bmat{\pi_x^\top & \pi_z^\top}^\top$. The \ac{FOO} vector is \eqn{\Lul = \bmat{\nabla_{u}\Lcal(t,x,\hat x,z,\hat z,\bar u,\lambda;\widehat W) \\ \nabla_\lambda \Lcal(t,x,\hat x,z,\hat z,\bar u;\widehat W)}.} Note that $\Ll$ is simply the constraint violation. Higher order derivatives also have a subscript for the relevant differential, e.g., $\nabla_u\Lu = \Lu{}_u$. The bordered Hessian is $\mathbb{H}(t,x,\hat x,z,\hat z,\bar u,\lambda;\widehat W) = \bmat{\Lu{}_u & \Ll{}_u \\ \Lu{}_\lambda & 0}.$ Unless otherwise specified, we abuse notation and drop the arguments when writing derivatives of $\Lcal$ and $\mathbb{H}$. The following assumptions ensure the optimization problem is well-conditioned.
\begin{assumption}
\label{assump:hessian}
  The objective function $J : \R \times \R^{dn_x} \times \R^{dn_z} \times \R^{m+n_z}$ is twice differentiable and radially unbounded in $\norm{\bar u}$. The Lagrangian $\Lcal$ is twice differentiable and there exist constants $k_2 > k_1 > 0$ such that $\forall t, x,\hat x, z, \hat z, \bar u, \lambda, W, \quad k_1I < \Lu{}_{u} < k_2I$.
\end{assumption}

\begin{assumption}
\label{assump:surjective}
  There exist constants $k_4>k_3>0$ such that $\forall x \in \Xcal, z \in \Zcal, \bar u \in \bar \Ucal, \widehat W \in \Wcal$, $k_3I < \nabla_{\bar u}\hat\tau(x,z,\bar u) \nabla_{\bar u}\hat\tau(x,z,\bar u)^\top < k_4I$.
\end{assumption}


If Assumptions \ref{assump:hessian} and \ref{assump:surjective} hold, then $\Lu = 0$ and $\Ll = 0$ is a sufficient condition for $(\bar u, \lambda)$ to be a strict local minimum~\cite[Theorem 12.6]{wright1999numerical}.

Instead of fully solving the optimization problem at discrete intervals, we construct a dynamical system whose solution converges to the optimal solution of control input $\bar u$ and Lagrange multipliers $\lambda$, based on~\cite{tjonnaas2008adaptive}. $\bar u$ and $\lambda$ are updated such that the \ac{FOO} vectors $\Lu$ and $\Ll$ asymptotically converge to zero, solving \eqref{eq:control_alloc}. 
The proposed adaptive control allocation update equations are
\begin{equation}
\begin{aligned}
\label{eq:ul_update}
\bmat{\dot {\bar u} \\ \dot \lambda} = -\mathbb{H}^{-1}(\Gamma_{u\lambda}\Lul + u_{\rm ff})
\end{aligned}
\end{equation}
\begin{equation}
  \begin{aligned}
  \label{eq:w_update}
    \dot{\widehat W} = {\rm Proj}\bigg(\widehat W, \Gamma_W\phi^\top B_\chi^\top\bigg(\Gamma_e e_s + \bmat{\Lcal_{\chi u} \\ \Lcal_{\chi\lambda}}^\top\Lul\bigg)\bigg),
    \end{aligned}
\end{equation}
where $u_{\rm ff}$ is a feed-forward term that compensates for the time evolution of the optimal solution, defined as $u_{\rm ff} =\bmat{\Lcal_{t u} \\ \Lcal_{t\lambda}} + \bmat{\Lcal_{\chi u} \\ \Lcal_{\chi\lambda}}(A_\chi\chi + B_\chi\hat\tau) + \bmat{\Lcal_{\hat \chi u} \\ \Lcal_{\hat \chi\lambda}}\dot{\hat \chi} + \bmat{\Lcal_{W u} \Lcal_{W\lambda}}\dot{\widehat W}.$ Recall that $\phi(x,z,u)$ is the known regressor for the uncertainty $W$ and is a function of the states and control input $u$. The projection operator ${\rm Proj}(\cdot,\cdot) : \R^w \times \R^w \to \R^w$ is defined in~\cite[Chapter 11]{lavretsky2024robust}\footnote{When applied to the parameter update law, the projection operator guarantees that the estimated parameters are bounded for all time.
}.
The matrices $\Gamma_{u\lambda} = {\rm diag}(\Gamma_u, \Gamma_\lambda)$, $\Gamma_W$, and $\Gamma_e$ are symmetric positive-definite gains. Assumption \ref{assump:surjective} ensures that $\bar u$ is always feasible with respect to the input constraints $\bar \Ucal$, which prevents \eqref{eq:ul_update} from ``blowing up'' due to the log barrier functions in $J$.

The feed-forward term $u_{\rm ff}$ accounts for the time evolution of the optimal solution $(\bar u^*,\lambda^*)$ because the control allocation problem~\eqref{eq:control_alloc} is parametric with respect to all variables. \Cref{th:adapt} in the next section details the derivation of the update laws. The estimated parameters are an auxiliary input to the control allocation and state predictor in \Cref{fig:block_diag}, whose outputs are fed back in the parameter update law \eqref{eq:w_update}.

In summary, the proposed approach is designed to solve \Cref{prob}, comprising the state predictors \eqref{eq:xhat} and \eqref{eq:zhat}, the high-level control laws \eqref{eq:slow_control} and \eqref{eq:fast_control}, the optimal control allocation update laws for $\bar u$ and $\lambda$ in \eqref{eq:ul_update}, and the parameter adaptation law \eqref{eq:w_update}. Next, we show that the closed-loop system indeed solves \Cref{prob}.

\section{Theoretical Arguments}
\label{sec:theory}
We reformulate the three subsystems into standard form with three levels of time-scale separation. The signals from the adaptation law, $e_{s_x}(t)$, $e_{s_x}(t)$, and $\widehat W(t)$, along with the reference trajectory $x^r(t)$, are considered external time-varying signals in the singularly-perturbed closed-loop system.
We summarize the coupled dynamics of $(e_x, \hat z,\bar u, \lambda)$:
\Aeqn{\label{eq:reformed_e_x}\dot e_x &= A_{r_x} e_x + B_x(\tau_x(e_x + x^r(t) + e_{s_x}(t),\hat z + e_{s_z}(t),u) \\ &- \bar\tau_x(e_x + x^r(t) + e_{s_x}(t),\vartheta,u) - \Llx(t,e_x,\hat z,\bar u))}\eqn{\dot {\hat z} &= A_{r_z} \hat z + B_z(k_{r_z,1} \vartheta - \Llz(t,e_x,\hat z,u)) \\ \label{eq:ul_sp} \bmat{\dot {\bar u} \\ \dot \lambda} &= -\mathbb{H}^{-1}\Gamma_{u\lambda}\Lul - u_{\rm ff}} where the dynamics of $e_x$ and $z$ are rewritten to include the application of the feedback laws \eqref{eq:slow_control} and \eqref{eq:fast_control}, respectively. The $\Llx$ and $\Llz$ terms are the constraint violations from the control allocation problem \eqref{eq:control_alloc}. The $\tau_x -\bar \tau_x$ term in \eqref{eq:reformed_e_x} is the mismatch between the full- and reduced-order models. The following analysis shows that both the $\Ll$ and $\tau_x$ mismatch terms vanish when $\epsilon=0$.

We also apply a state transformation to the fast subsystem to put it into standard form. Define $\zeta := T(\epsilon)\hat z$, where we recall that $T(\epsilon) = {\rm diag}(1,\epsilon, \ldots, \epsilon^{d-1}) \otimes I_{n_z}$. Then we have \eqn{\dot \zeta & = T(\epsilon)\dot{\hat z} = T(\epsilon)A_{r_z}T(\epsilon)^{-1}\zeta + T(\epsilon)B_z(k_{r_z,1}\vartheta - \Llz) \nonumber\\ &= \frac{1}{\epsilon}\bar A_{r_x}\zeta+\frac{1}{\epsilon}B_z(k_{r_x,1}\vartheta - \epsilon^{d-1}\Llz),} where $\bar A_{r_x}$ has the same entries as $A_{r_x}$ and the dimension of $A_{r_z}$, which puts the $e_x$ and $\zeta$ systems on the same time-scale. Multiplying by $\epsilon$ gives: \eqn{\epsilon\dot \zeta = \bar A_{r_x}\zeta + B_z(k_{r_x,1}\vartheta - \epsilon^{d}\Llz),} which transforms the fast subsystem into standard form and brings it onto the same time-scale as the slow subsystem.

We make the following assumption when analyzing the singularly-perturbed system.
\begin{assumption} \label{assump:sens} The sensitivity of the optimal solution with respect to the transformed state $\zeta$ satisfies $\frac{\partial}{\partial \zeta_1}\vartheta^*(t,e_x,\zeta)\bigg|_{\zeta=\zeta^*} < 1, ~ \forall t, e_x \in \Xcal$.
\end{assumption}

Let $\Gamma_{u\lambda}$ be selected such that $\underline\mu(\Gamma_{u\lambda}) = \frac{1}{\epsilon^2}\underline\mu(A_{r_x})$. Let $\bar\Gamma_{u\lambda} = \epsilon^2\Gamma_{u\lambda}$ and rewrite \eqref{eq:ul_sp} as 
\eqn{\epsilon^2\bmat{\dot{\bar u} \\ \dot \lambda} = -\mathbb{H}^{-1}(\bar\Gamma_{u\lambda} \Lul + \epsilon^2 u_{\rm ff}).} The three-time-scale separated system is now in standard form. With the $\epsilon$-scaling, all systems evolve at the speed of the slow $x$ subsystem. The rescaled subsystems are: \Aeqn{\label{eq:slow_sp}&\dot e_x = f(t,e_x,\zeta,\bar u, \epsilon) := A_{r_x} e_x \\ &+ B_x(\tau_x(e_x+x^r(t)+e_{s_x}(t),T(\epsilon)^{-1}\zeta + e_{s_z}(t),u) \\ &- \bar\tau_x(e_x+x^r(t)+e_{s_x}(t),\vartheta,u) - \Llx) 
}
\Aeqn{ \label{eq:fast_sp}&\epsilon\dot \zeta = g(t,e_x,\zeta,\bar u, \epsilon) := \bar A_{r_x}\zeta + B_z(k_{r_x,1}\vartheta - \epsilon^{d}\Llz)
} \Aeqn{\label{eq:fastest_sp} & \epsilon^2\bmat{\dot{\bar u} \\ \dot \lambda} = p(t,e_x,\zeta,\bar u, \lambda,\epsilon) := -\mathbb{H}^{-1}(\bar\Gamma_{u\lambda} \Lul + \epsilon^2u_{\rm ff}) 
.}
   
In \eqref{eq:fastest_sp}, setting $\epsilon = 0$ gives $-\mathbb{H}^{-1}\bar \Gamma_{u\lambda} \Lul = 0$. Assumptions \ref{assump:hessian} and \ref{assump:surjective} imply that $\mathbb{H}$ is always invertible, meaning $\mathbb{H}^{-1}\bar \Gamma_{u\lambda}$ can be divided out. The resulting equation is $\Lul = 0$, whose root is clearly a locally optimal solution to \eqref{eq:control_alloc}. Finally, let $q(t,e_x,\zeta) = \bmat{\bar u^* \\ \lambda^*}$ be a root of \eqref{eq:fastest_sp}, which is a strict local minimum of \eqref{eq:control_alloc} and hence an isolated root. Shifting the origin by $q$ results in the following boundary layer system: \eqn{\label{eq:fastest_bl}\frac{\od}{\od t''}\bmat{\bar u - \bar u^* \\ \lambda - \lambda^*} = -\mathbb{H}^{-1}\bar\Gamma_{u\lambda}\Lul}

Replacing $\bar u$ with $\bar u^*$ and $\lambda$ with $\lambda^*$ sets $\Ll = 0$: \Aeqn{\dot e_x &= A_{r_x} e_x+B_x(\tau_x(e_x + x^r(t)+e_{s_x}(t),\zeta,u^*) \\ &- \bar\tau_x(e_x + x^r(t)+e_{s_x}(t),\vartheta^*,u^*))}
\eqn{\label{eq:reduced_fast}&0 = \bar A_{r_x}\zeta + B_zk_{r_x,1}\vartheta^*(t,e_x,\zeta)} 
The isolated root of \eqref{eq:reduced_fast} is $\zeta^* = \bmat{h(t,e_x)^\top & 0 & \cdots & 0}^\top$, where 
$h(t,e_x) = \vartheta^*(t,e_x,h(t,e_x))$. Let $y = \zeta - \zeta^*$. The boundary layer subsystem for $\eqref{eq:reduced_fast}$ is \Aeqn{\label{eq:fast_bl}\frac{\od}{\od t'}y &= \bar A_{r_x}(y + \zeta^*)\\ &+ B_zk_{r_x,1}\vartheta^*(t,e_x + x^r(t)+e_{s_x}(t),y + \zeta^*).}
Finally, we replace $z_1$ with $\zeta_1=\vartheta^*$ in the resulting reduced slow subsystem, which cancels the $\tau_x$ terms, leaving \eqn{\label{eq:reduced_slow}\dot e_x = A_{r_x}e_x.}

\begin{theorem}
  \label{th:tikh}
Let $e_x(t,\epsilon)$ be the solution to \eqref{eq:slow_sp} and $\bar e_x(t)$ be the solution to \eqref{eq:reduced_slow} on the interval $[t_0,\infty)$. Then there exists a positive constant $\epsilon^*$ such that for all $0 < \epsilon < \epsilon^*$, $e_x(t,\epsilon) - \bar e_x(t) = O(\epsilon)$ holds uniformly for $t \in [t_0,\infty)$.
\end{theorem}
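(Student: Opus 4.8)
The plan is to recognize the statement as a direct application of Tikhonov's theorem on the infinite interval \cite[Theorem 11.2]{khalil2002nonlinear} to the three-time-scale system \eqref{eq:slow_sp}--\eqref{eq:fastest_sp}, invoked in the nested two-time-scale fashion described in Section \ref{sec:prelim}. The hypotheses I must verify are: (i) the system is in standard form, i.e.\ the roots $q$ of \eqref{eq:fastest_sp} and $h$ of \eqref{eq:reduced_fast} are isolated; (ii) $f$, $g$, $p$ and the roots are sufficiently smooth with derivatives bounded uniformly in $t$ on the domain of interest; (iii) the origin of each boundary-layer system is exponentially stable, uniformly in the frozen slow variables and in $t$; and (iv) the equilibrium of the reduced slow subsystem \eqref{eq:reduced_slow} is exponentially stable. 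Step (i) is already in hand, since $q$ was shown to be a strict local minimizer and hence isolated, and $\zeta^*$ was identified as the isolated root of \eqref{eq:reduced_fast}; step (iv) is immediate because $A_{r_x}$ is Hurwitz by the high-level controller design. The substance of the proof is verifying (iii) for the two boundary layers.

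For the fastest boundary layer \eqref{eq:fastest_bl}, following \cite{tjonnaas2008adaptive}, I would use the Lyapunov candidate $V = \tfrac12 \Lul^\top \Lul$. In the boundary layer $t$, $e_x$, and $\zeta$ are frozen, so the only variation of $\Lul$ is through $(\bar u,\lambda)$; the update \eqref{eq:fastest_bl} was constructed precisely so that $\frac{\od}{\od t''}\Lul = \mathbb{H}\,\frac{\od}{\od t''}\bmat{\bar u - \bar u^* \\ \lambda - \lambda^*} = -\bar\Gamma_{u\lambda}\Lul$, giving $\dot V = -\Lul^\top \bar\Gamma_{u\lambda}\Lul$, which is negative definite in $\Lul$ since $\bar\Gamma_{u\lambda}$ is symmetric positive definite. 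Assumptions \ref{assump:hessian} and \ref{assump:surjective} guarantee $\mathbb{H}$ is nonsingular with uniformly bounded inverse, so near the root $\norm{\Lul}$ is equivalent to the error norm $\norm{\bmat{\bar u - \bar u^* \\ \lambda - \lambda^*}}$; this yields exponential stability of the origin, uniform in $(t,e_x,\zeta)$ with constants set by $k_1,\dots,k_4$.

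For the fast boundary layer \eqref{eq:fast_bl}, $\frac{\od}{\od t'}y = \bar A_{r_x}(y+\zeta^*) + B_z k_{r_x,1}\vartheta^*(\cdot,y+\zeta^*)$, I would linearize about $y=0$. Using the companion structure of $\bar A_{r_x}$ and $B_z$, the Jacobian modifies only the last block row of $\bar A_{r_x}$ by the sensitivity $\frac{\partial \vartheta^*}{\partial \zeta_1}$ evaluated at $\zeta^*$. Assumption \ref{assump:sens} ($\frac{\partial \vartheta^*}{\partial \zeta_1} < 1$) ensures this perturbation preserves the Hurwitz property of the companion matrix, so the origin of the fast boundary layer is exponentially stable, uniformly in $(t,e_x)$. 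Together with steps (i), (ii), (iv), applying Tikhonov's theorem first to the inner (fast/fastest) pair and then to the resulting (slow/fast) pair produces the claimed estimate $e_x(t,\epsilon) - \bar e_x(t) = O(\epsilon)$ uniformly on $[t_0,\infty)$.

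I expect the main obstacle to be establishing (iii) \emph{uniformly}: the exponential-stability estimates for both boundary layers must hold with constants independent of the frozen slow variables, of $t$, and of the external signals $e_{s_x}(t)$, $e_{s_z}(t)$, $\widehat W(t)$, $x^r(t)$. This requires these external signals to be bounded -- the projection operator in \eqref{eq:w_update} keeps $\widehat W$ bounded, and $e_s$ remains bounded through \eqref{eq:es_dot} -- and requires the parameter-independent bounds of Assumptions \ref{assump:hessian}, \ref{assump:surjective}, and \ref{assump:sens} to translate into stability constants that do not degrade as the frozen parameters range over $\Xcal\times\Zcal\times\Wcal$. Verifying that the smoothness and boundedness in (ii) hold globally enough on the (compact) domains of interest, so that the infinite-time version of Tikhonov's theorem applies rather than only its finite-horizon counterpart, is the most delicate bookkeeping in the argument.
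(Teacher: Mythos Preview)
Your proposal is correct and follows essentially the same route as the paper: nested application of Tikhonov's theorem, with exponential stability of the two boundary layers and of the reduced slow system as the hypotheses to verify, the fast boundary layer handled by linearization and \Cref{assump:sens}. The only notable difference is that for the fastest boundary layer the paper linearizes \eqref{eq:fastest_bl} directly to obtain the Jacobian $-\mathbb{H}^{-1}\bar\Gamma_{u\lambda}\mathbb{H}$ (similar to $-\bar\Gamma_{u\lambda}$, hence Hurwitz), whereas your Lyapunov argument on $\Lul$ reaches the same conclusion more transparently; your explicit attention to uniformity in the frozen variables and external signals is also more careful than the paper, which simply assumes the regularity conditions hold.
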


\begin{proof}
We apply Tikhonov's Theorem \cite[Theorem 11.2]{khalil2002nonlinear}, which requires a regularity condition and exponential stability of the boundary layer and reduced systems. We assume that the regularity condition holds.

First, we show that the boundary layer systems are exponentially stable. The linearization of \eqref{eq:fastest_bl} evaluated at $\bar u = \bar u^*$ and $\lambda = \lambda^*$ is $\nabla_{\bar u, \lambda}(-\mathbb{H}^{-1}\bar\Gamma_{u\lambda}\Lul)$. Simplifying, we have \eqnN{& -\nabla_{\bar u,\lambda}\mathbb{H}^{-1} \bar\Gamma_{u\lambda}\underbrace{\bmat{\Lu(\bar u^*,\lambda^*) \\ \Ll(\bar u^*,\lambda^*)}}_0 - \mathbb{H}^{-1} \bar\Gamma_{u\lambda}\underbrace{\nabla_{\bar u, \lambda}\Lul}_{\mathbb{H}} ,}
resulting in the linear system $\frac{\od}{\od t''}\bmat{\bar u - \bar u^* \\ \lambda - \lambda^*} = -\mathbb{H}^{-1} \bar\Gamma_{u\lambda} \mathbb{H}\bmat{\bar u - \bar u^* \\ \lambda - \lambda^*}$. The origin is locally exponentially stable because $-\mathbb{H}^{-1} \bar\Gamma_{u\lambda} \mathbb{H}$ is negative definite.

To show exponential stability of the origin of \eqref{eq:fast_bl}, we linearize the system about $y = 0$. Note that $\nabla_{y} = \nabla_\zeta$. The linearization is $F := \nabla_{y}(\bar A_{r_x}(y + \zeta^*) + B_zk_{r_x,1}\vartheta^*(t,e_x,y + \zeta^*) = \bar A_{r_x} + B_zk_{r_x,1}\nabla_{\zeta}\vartheta^*(t,e_x,y+\zeta^*)$. Since $B_z$ is non-zero in the last element, it suffices to check the the last row of $F$, which is $-K_{r_x} + k_{r_x,1}\nabla_{\zeta}\vartheta^*(t,e_x,\zeta)|_{\zeta=\zeta^*}$. To ensure $F$ is Hurwitz, each element of the last row of $F$ must be negative, which follows from \Cref{assump:sens}. Hence, the origin of the boundary layer system \eqref{eq:fast_bl} is locally exponentially stable. The reduced subsystem \eqref{eq:reduced_slow} is clearly exponentially stable. Applying Tikhonov's Theorem, we have $e_x(t,\epsilon) - \bar e_x(t) = \hat x(t,\epsilon) - x^r(t) - \exp(A_{r_x}t)e_x(0) = O(\epsilon)$. 
\end{proof}

\Cref{th:tikh} essentially states that using the proposed high-level control and control allocation update laws, the tracking error $\hat x(t) - x^r(t)$ is made arbitrarily small by reducing $\epsilon$.

We next show that the error $e_s(t)$ is bounded.

\begin{theorem}
\label{th:adapt}
The closed-loop system of \cref{eq:sys,eq:xhat,eq:zhat,eq:ul_update,eq:w_update}
has ultimately bounded state prediction error $e_s(t)$ and first-order optimality vectors $\Lu(t)$ and $\Ll(t)$, with a bound proportional to $\delta_{max}$. Moreover, the parameter estimation error $\widetilde W(t)$ is ultimately bounded.
\end{theorem}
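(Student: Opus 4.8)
The plan is a composite Lyapunov argument over the three error quantities: the prediction error $e_s$ with dynamics \eqref{eq:es_dot}, the first-order optimality vector $\Lul$, and the parameter error $\widetilde W$. The essential preliminary step is to derive the closed-loop dynamics of $\Lul$. Differentiating $\Lul$ in time and recalling that it depends on $(t,\chi,\hat\chi,\bar u,\lambda,\widehat W)$, its total derivative collects exactly the terms defining $u_{\rm ff}$, except that the true plant contributes $\dot\chi = A_\chi\chi + B_\chi\tau$ whereas $u_{\rm ff}$ uses the estimated $A_\chi\chi + B_\chi\hat\tau$. Substituting the update law \eqref{eq:ul_update} then cancels both $u_{\rm ff}$ and the bordered Hessian, and the residual plant mismatch is $B_\chi(\tau-\hat\tau)=B_\chi(\phi\widetilde W+\delta)$, leaving
\eqn{\frac{\od}{\od t}\Lul = -\Gamma_{u\lambda}\Lul + \Psi B_\chi(\phi\widetilde W + \delta), \qquad \Psi := \bmat{\Lcal_{\chi u}\\ \Lcal_{\chi\lambda}}.}
Thus the optimality residual obeys an exponentially stable linear system perturbed by the representation error $\phi\widetilde W + \delta$; this identity is what motivates the precise form of $u_{\rm ff}$.

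With this identity in hand, I would take the composite Lyapunov function
\eqn{V = e_s^\top\Gamma_e e_s + \Lul^\top\Lul + \widetilde W^\top\Gamma_W^{-1}\widetilde W,}
choosing the symmetric gain $\Gamma_e$ to additionally satisfy $A_s^\top\Gamma_e + \Gamma_e A_s = -Q_e \prec 0$, which is feasible since $A_s$ is Hurwitz by construction. Differentiating $V$ and substituting \eqref{eq:es_dot}, the identity above, and the adaptation law \eqref{eq:w_update}, I would invoke the projection inequality $-\widetilde W^\top\Gamma_W^{-1}{\rm Proj}(\widehat W,\Gamma_W y)\le -\widetilde W^\top y$. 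The two driving terms $\Gamma_e e_s$ and $\Psi^\top\Lul$ appearing in \eqref{eq:w_update} are engineered precisely so that the sign-indefinite cross terms $2e_s^\top\Gamma_e B_\chi\phi\widetilde W$ and $2\Lul^\top\Psi B_\chi\phi\widetilde W$ are annihilated by the $\widetilde W$ contribution. What remains is
\eqn{\dot V \le -e_s^\top Q_e e_s - 2\Lul^\top\Gamma_{u\lambda}\Lul + 2e_s^\top\Gamma_e\delta + 2\Lul^\top\Psi B_\chi\delta.}

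The first two terms are negative definite in $(e_s,\Lul)$, while the last two are bounded by $\norm{\delta}\le\delta_{\max}$ together with uniform bounds on $\Gamma_e$, $B_\chi$, and $\Psi$. Completing the square (Young's inequality) then shows $\dot V<0$ whenever $\norm{e_s}^2+\norm{\Lul}^2$ exceeds a threshold proportional to $\delta_{\max}^2$, so a standard ultimate-boundedness argument yields the claimed bound on $e_s$, $\Lu$, and $\Ll$ proportional to $\delta_{\max}$. Note that $\widetilde W$ does not appear in $\dot V$ --- it cancels exactly --- so its ultimate boundedness is \emph{not} obtained from the Lyapunov decrease; instead it follows directly from the projection operator, which confines $\widehat W$ to the known compact set $\Wcal$ for all time, so that $\widetilde W = W - \widehat W$ is bounded.

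The hard part will be establishing the uniform-in-time boundedness of the state- and input-dependent coefficients --- chiefly $\Psi=\bmat{\Lcal_{\chi u}\\\Lcal_{\chi\lambda}}$, the regressor $\phi$, and $\mathbb{H}^{-1}$ --- which is what legitimizes both the cross-term cancellation and the $\delta$-bound uniformly along trajectories. Here I would lean on \Cref{assump:hessian} to bound $\Lu{}_u$, on \Cref{assump:surjective} to keep $\mathbb{H}$ uniformly invertible (so $\mathbb{H}^{-1}$ is bounded), on the input constraints together with the log-barrier in $J$ to confine $\bar u$ to $\bar\Ucal$, and on the projection operator to bound $\widehat W$; assembling these into a single uniform coefficient bound without circular dependence on the boundedness being proved is the principal technical obstacle. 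A secondary subtlety is the dual role of $\Gamma_e$ as both the adaptation gain in \eqref{eq:w_update} and the Lyapunov certificate for the Hurwitz matrix $A_s$.
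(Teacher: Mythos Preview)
Your proposal is correct and follows essentially the same composite Lyapunov argument as the paper: the same candidate $V$ (modulo a harmless factor of $\tfrac{1}{2}$), the same cancellation of the $\phi\widetilde W$ cross terms via the adaptation law and projection inequality, the same Young's-inequality completion to obtain a $\delta_{\max}$-proportional ultimate bound on $(e_s,\Lu,\Ll)$, and the same appeal to projection for boundedness of $\widetilde W$. Your explicit derivation of $\frac{\od}{\od t}\Lul = -\Gamma_{u\lambda}\Lul + \Psi B_\chi(\phi\widetilde W+\delta)$ before touching $V$ is cleaner than the paper's inline version, and your flagging of the uniform-in-time bound on $\Psi = \bmat{\Lcal_{\chi u}\\\Lcal_{\chi\lambda}}$ is a legitimate technical point that the paper simply assumes when writing $d_{\chi u}:=\sup_t\norm{\Lcal_{\chi u}B_\chi}$.
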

\begin{proof}
Define the candidate Lyapunov-like function 
\eqnN{
  & V = \frac{1}{2}\bigg(e_s^\top \Gamma_e e_s + \Lul^\top\Lul + \widetilde W^\top\Gamma_W^{-1}\widetilde W\bigg).
}
The time derivative of the \ac{FOO} vector is \eqnN{\bmat{\dot{\Lu} \\ \dot{\Ll}} 
 &= \underbrace{\bmat{\Lu{}_u & \Ll{}_u \\ \Lu{}_\lambda & 0}}_{\mathbb H}\bmat{\dot u \\ \dot\lambda} + \bmat{\Lcal_{tu} \\ \Lcal_{t\lambda}} + \bmat{\Lcal_{\chi u} \\ \Lcal_{\chi\lambda}}\dot \chi \\&+ \bmat{\Lcal_{\hat\chi u} \\ \Lcal_{\hat\chi\lambda}}\dot{\hat \chi} + \bmat{\Lcal_{W u} \\ \Lcal_{W\lambda}}\dot{\widehat W},}
and time derivative of $V$ is
\eqnN{\dot{V}&=
e_s^\top \Gamma_e(A_se_s + B_\chi(\phi\widetilde W + \delta)) \\ & + \Lul^\top\bigg( {\mathbb H}\bmat{\dot {\bar u} \\ \dot\lambda} + \bmat{\Lcal_{tu} \\ \Lcal_{t\lambda}} + \bmat{\Lcal_{\hat\chi u} \\ \Lcal_{\hat\chi\lambda}}\dot {\hat \chi} + \bmat{\Lcal_{\hat\vartheta u} \\ \Lcal_{\hat\vartheta\lambda}}\dot{\widehat W} \\ &+ \bmat{\Lcal_{\chi u} \\ \Lcal_{\chi\lambda}} \underbrace{A_\chi\chi + B_\chi\hat\tau}_{\rm known} + \underbrace{B_\chi(\phi\widetilde W + \delta)}_{\rm unknown}\bigg) - \widetilde W^\top\Gamma_W^{-1} \dot{\widehat W},}
where we used $\dot{\widetilde W} = \dot W - \dot{\widehat W}=-\dot{\widehat W}$ and plugged in \eqref{eq:es_dot} and split $\dot \chi$ into known and unknown parts. 
Define $Q_e = -\frac{1}{2}(\Gamma_eA_s + A_s^\top\Gamma_e)$, meaning $Q_e \succ 0$. Then grouping terms
and substituting the update laws for $\dot {\bar u}$, $\dot \lambda$, and $\dot {\widehat W}$ from \cref{eq:ul_update,eq:w_update}, we have
\eqnN{\label{eq:vdot}
    \dot V \le -e_s^\top Q_e e_s - \Lu^\top\Gamma_u\Lu - \Ll^\top\Gamma_\lambda\Ll \\+ (e_s^\top\Gamma_e + \Lu^\top\Lcal_{\chi u} + \Ll^\top\Lcal_{\chi\lambda}) B_\chi\delta.}
The projection operator ensures that the term $-\widetilde W^\top\Gamma_W^{-1} \dot{\widehat W}$ is always negative semi-definite~\cite[Chapter 11.4]{lavretsky2024robust}. Let $\underline\mu(\cdot)$ be the smallest eigenvalue of a matrix. Applying Cauchy-Schwartz and Young's inequalities, we have \eqnN{e_s^\top \Gamma_e B_\chi \delta \le \frac{1}{2}\underline\mu(Q_e)\norm{e_s}^2 + \frac{\norm{\Gamma_e B_\chi}^2\norm{\delta}^2}{2\underline\mu(Q_e)}, \\
\Lu^\top \Lcal_{\chi u} B_\chi \delta \le \frac{1}{2}\underline\mu(\Gamma_u)\norm{\Lu}^2 + \frac{\norm{\Lcal_{\chi u} B_\chi}^2\norm{\delta}^2}{2\underline\mu(\Gamma_u)}, \\
\Ll^\top \Lcal_{\chi\lambda} B_\chi \delta \le \frac{1}{2}\underline\mu(\Gamma_\lambda)\norm{\Ll}^2 + \frac{\norm{\Lcal_{\chi\lambda} B_\chi}^2\norm{\delta}^2}{2\underline\mu(\Gamma_\lambda)}.
} 

Plugging in $\delta_{\max}^2$ as the upper bound of $\norm{\delta}^2$, we conclude that
$\dot V$ is non-positive when $(e_s,\Lu,\Ll)$ is outside the set \eqnN{\Omega = \bigg\{(e_s,\Lu,\Ll)
: \norm{e_s} \le \sqrt{\frac{D}{\underline \mu(Q_e)}}\delta_{\max} \\ \land \norm{\Lu} \le \sqrt{\frac{D}{\underline \mu(\Gamma_u)}}\delta_{\max}\land \norm{\Ll} \le \sqrt{\frac{D}{\underline \mu(\Gamma_\lambda)}}\delta_{\max} \bigg\}} where $D := (\frac{\norm{\Gamma_e B_\chi}^2}{\underline\mu(Q_e)} + \frac{d_{\chi u}^2}{\underline\mu(\Gamma_u)} + \frac{d_{\chi \lambda}^2}{\underline\mu(\Gamma_\lambda)})$, $d_{\chi u} := \sup_t\norm{\Lcal_{\chi u} B_\chi}$, and $d_{\chi \lambda} := \sup_t\norm{\Lcal_{\chi \lambda} B_\chi}$. Hence, $(e_s,\Lu,\Ll)$ enters the set $\Omega$ in finite time, i.e., ultimately bounded. The bounds on $e_s(t)$, $\Lu(t)$, and $\Ll(t)$ are proportional to $\delta_{\max}$. Furthermore, the parameter estimation error $\widetilde{W}$ is bounded due to the projection operator.
\end{proof}
The following corollary solves Problem \ref{prob}.
\begin{corollary}
  $x(t)$ asymptotically tracks $x^r(t)$ with bounded error on the order of $\epsilon$ and $\delta_{\max}$.
\end{corollary}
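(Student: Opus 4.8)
The plan is to bound the true tracking error $\norm{x(t)-x^r(t)}$ by routing it through the slow-state predictor $\hat x$ and then invoking the two theorems already in hand. First I would apply the triangle inequality, $\norm{x(t)-x^r(t)} \le \norm{x(t)-\hat x(t)} + \norm{\hat x(t)-x^r(t)} = \norm{e_{s_x}(t)} + \norm{e_x(t)}$, where I use $e_{s_x} = x - \hat x$ (the slow component of $e_s = \chi - \hat\chi$) and $e_x = \hat x - x^r$ (as identified in the proof of \Cref{th:tikh}). This reduces the corollary to separate asymptotic estimates of the prediction error $e_{s_x}$ and the predictor tracking error $e_x$.

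For the first term, \Cref{th:adapt} establishes that $e_s$, and hence its slow component $e_{s_x}$, is ultimately bounded with a bound proportional to $\delta_{\max}$; this supplies the $O(\delta_{\max})$ contribution. For the second term, \Cref{th:tikh} gives $e_x(t,\epsilon) - \bar e_x(t) = O(\epsilon)$ uniformly on $[t_0,\infty)$, where $\bar e_x(t) = \exp(A_{r_x}t)\,e_x(0)$ solves the reduced slow subsystem \eqref{eq:reduced_slow}. Since $A_{r_x}$ is Hurwitz by construction, $\bar e_x(t)\to 0$ exponentially, so $\limsup_{t\to\infty}\norm{e_x(t)} = O(\epsilon)$. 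Combining the two estimates through the triangle inequality yields $\limsup_{t\to\infty}\norm{x(t)-x^r(t)} = O(\epsilon) + O(\delta_{\max})$, which is exactly the claimed asymptotic tracking guarantee.

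The main obstacle, or rather the point requiring care, is that the two theorems are not logically independent: \Cref{th:tikh} treats $e_{s_x}(t)$, $e_{s_z}(t)$, and $\widehat W(t)$ as external time-varying signals feeding the singularly-perturbed subsystems \eqref{eq:slow_sp}--\eqref{eq:fastest_sp}, and its $O(\epsilon)$ conclusion is only legitimate if those signals stay bounded and the regularity and stability hypotheses hold uniformly. I would therefore make explicit that \Cref{th:adapt} furnishes precisely this boundedness (with the projection operator additionally keeping $\widehat W$, and thus $\widetilde W$, bounded for all time), thereby closing the loop between the two results. A secondary subtlety is reconciling the two notions of ``smallness'': \Cref{th:adapt} provides ultimate boundedness (entry into a residual set in finite time), whereas \Cref{th:tikh} provides a uniform $O(\epsilon)$ approximation of a trajectory that itself decays to zero. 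The combined conclusion is consequently an asymptotic ($\limsup$) bound rather than a uniform bound for all $t\ge 0$, and I would phrase the statement accordingly.
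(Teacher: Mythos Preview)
Your proposal is correct and follows essentially the same route as the paper: decompose $x-x^r = e_{s_x}+e_x$, invoke \Cref{th:adapt} for an ultimate bound on $e_{s_x}$ proportional to $\delta_{\max}$, and invoke \Cref{th:tikh} together with the exponential decay of $\bar e_x(t)=\exp(A_{r_x}t)e_x(0)$ to obtain an $O(\epsilon)$ asymptotic bound on $e_x$. Your added remarks on the dependence of \Cref{th:tikh} on the boundedness furnished by \Cref{th:adapt}, and on the distinction between ultimate and uniform bounds, are sound refinements that the paper leaves implicit.
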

\begin{proof}
  From Theorem \ref{th:adapt}, we showed that the signal $e_{s_x}(t)$ is ultimately bounded by $\norm{e_s} \le \sqrt{\frac{D}{\underline \mu(Q_e)}}\delta_{\max}$, where $D$ is defined in Theorem \ref{th:adapt}. Since $e_{s_x}(t) = x(t) - \hat x(t)$ and using the results of Theorem \ref{th:tikh}, we have $x(t) - x^r(t) = e_s(t) + \exp(A_{r_x}t)e_x(0) + O(\epsilon)$, asymptotically, $\norm{x(t) - x^r(t)} \le \sqrt{\frac{D}{\underline \mu(Q_e)}}\delta_{\max} + k\epsilon$, where $k$ is defined in \Cref{sec:prelim}.
\end{proof}

\section{Simulation}
\label{sec:sim}
\begin{figure*}[t]
  \centering
  \begin{minipage}{0.48\linewidth}
    \centering
    \includegraphics[width=\linewidth]{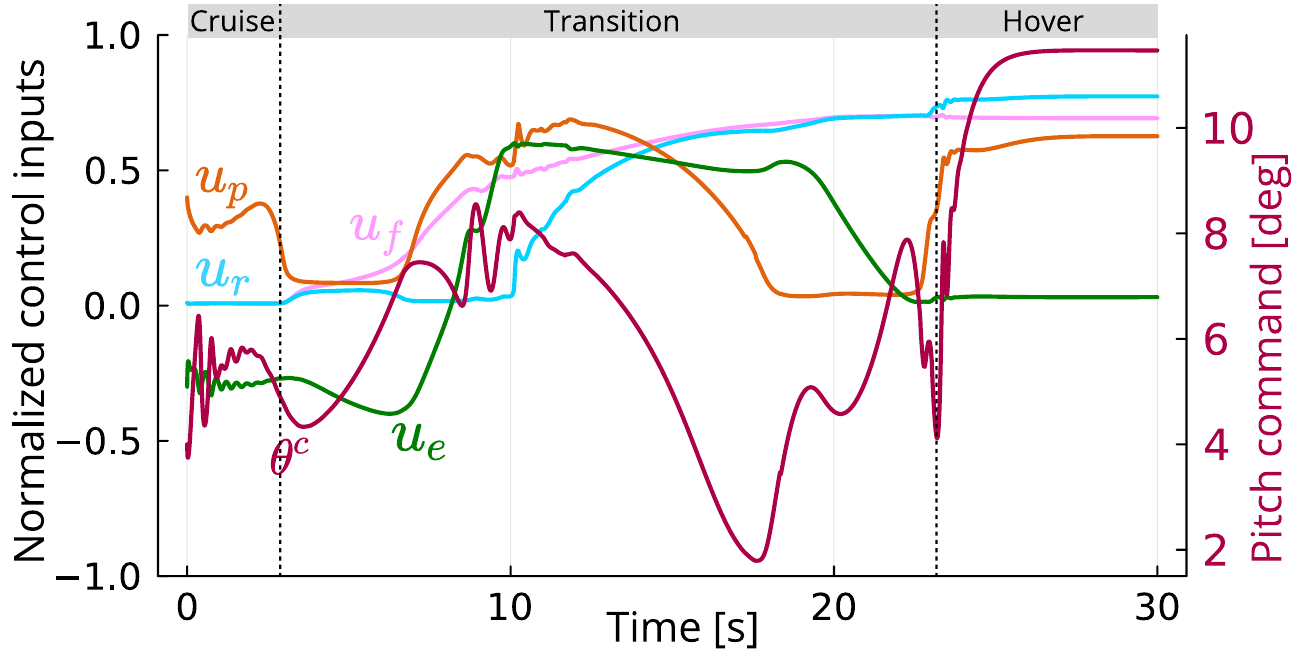}
    \caption{Control inputs over time during the three \ac{VTOL} landing phases: cruise, transition, and hover. }
    \label{fig:input}
  \end{minipage}\hfill
  \begin{minipage}{0.48\linewidth}
    \centering
    \includegraphics[width=\linewidth]{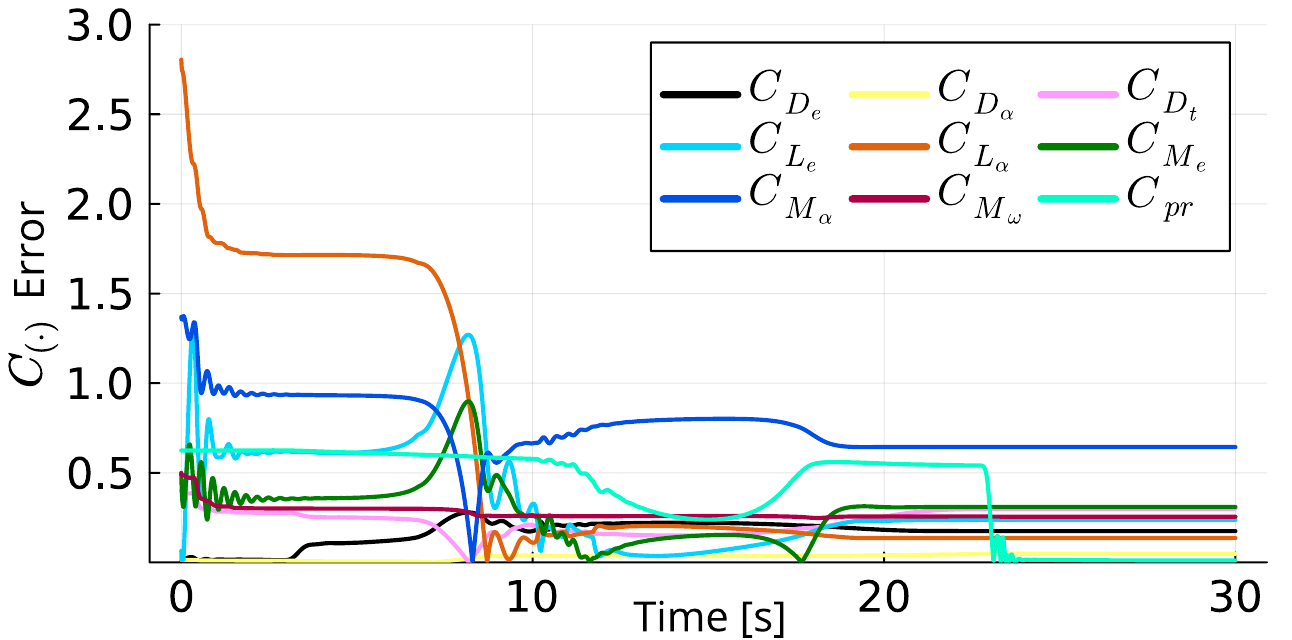}
    \caption{Parameter estimation error over time. Initial estimates are initialized at 1.5 times the true values.}
    \label{fig:params}
  \end{minipage}
  \vspace{-15pt}
\end{figure*}

Consider a 3-DOF planar \ac{VTOL} quadplane\footnote{The notation in this example is unrelated to previous sections.}:
\eqn{ &\dot p = v, \quad m\dot v = mg + R(\theta)F_b(v, \theta, u), \nonumber \\ &\dot \theta = \omega, \quad J\dot \omega = M(v,\theta, \omega, u), \nonumber } where $p = [p_x,p_z]^\top \in \R^2$ is the position in the inertial frame ($p_z$ points downward), $v = [v_x,v_z]^\top \in \R^2$ is the velocity in the inertial frame, $\theta \in [-\pi,\pi]$ is the pitch angle in radians, $\omega \in \R$ is the pitch angular velocity, and $u = [u_f, u_r, u_p, u_e]^\top \in \Ucal = [0,1]\times[0,1]\times[0,1]\times[-1,1]$ is the control input which contains the front vertical propeller speed $u_f$, rear vertical propeller speed $u_r$, pusher propeller speed $u_p$, and elevator deflection $u_e$. All control inputs are normalized by their respective limits. $m >0$ is the vehicle mass, $J > 0$ is the moment of inertia, and $g = [0,g_z]^\top$ is the gravitational acceleration. $R(\theta) \in \SO(2)$ is a rotation matrix between the body frame and the inertial frame. $F_b$ is the sum of propulsive and aerodynamic forces in the body frame. $M$ is the sum of propulsive and aerodynamic moments. $F_b$ and $M$ are nonlinear functions of the states and control inputs.

Here, the slow system is the translational variables and the fast system is the angular variables. Treating the pitch angle as a virtual input, the \ac{VTOL} becomes overactuated in all phases of flight. In a hover while maintaining a positive pitch angle, the \ac{VTOL} is fully-actuated without the pitch angle virtual command, which means the pitch angle can be chosen arbitrarily, as shown in the landing scenario below.

The aerodynamics are nonlinear functions of the propeller speeds, including interactions between propellers, rotor drag, and stall dynamics. Aerodynamic coefficients are denoted by $C_{(\cdot)} \in \R$. The airspeed is $V = \norm{v}$ and the angle of attack is $\alpha = \theta - \tan^{-1}(v_2/v_1)$. $\rho > 0$ is the air density and $S > 0$ is the wing area. 
$F_b(v,\theta,u) = R(\alpha)0.5\rho V^2\bmat{-C_D \\ C_L } + \bmat{T_{\max,p}u_p^2 \\ T_{\max,v}(u_f^2 + C_{pr}(V,u_p)u_r^2)}$. $T_{\max,(\cdot)} > 0$ is the thrust at maximum propeller speed. 
$C_{pr}(V,u_p)$ is a degradation factor of the rear propeller, as a function of the pusher propeller speed and airspeed, which introduces non-linearities in the control input, as $C_{pr}(V,u_p) = 1 - (1 - C_{pr}^{\rm nom})(1-V/V_{\max})u_p^2$. The lift and drag coefficients are

$C_D = C_{D_0} + C_{D_\alpha}\alpha^2 + C_{D_e}u_e + C_{D_t}(u_f+u_r)$ and $C_L = + C_{L_e}u_e + C_{L_\omega}\omega + (1-\sigma(\alpha))(C_{L_0} + C_{L_\alpha}\alpha) + \sigma(\alpha)2\sgn(\alpha)\sin^2(\alpha)\cos(\alpha),$ where $\sigma(\alpha)$ is a sigmoid function that blends the linear aerodynamics at low angle of attack with the nonlinear stall aerodynamics at high angle of attack~\cite{beard2012small}. The pitch moment dynamics are $M(v,\theta, \omega, u) = 0.5\rho V^2S\bar c C_M + l_v T_{\max, v}(u_f^2 - C_{pr}(V,u_p)u_r^2)$ where $C_M = C_{M_e}u_e + C_{M_\omega}\omega + (1-\sigma(\alpha))(C_{M_0} + C_{M_\alpha}\alpha) - \sigma(\alpha)2\sgn(\alpha)\sin^2(\alpha)\cos(\alpha)$. All coefficients and simulation parameters are available in our public code repository\footnote{\href{https://github.com/dcherenson/adaptive-control-underactuated.git}{https://github.com/dcherenson/adaptive-control-underactuated.git}}.


The landing reference trajectory contains three segments: cruise at constant altitude and speed, transition at constant deceleration and descent rate, and hover at a constant descent rate. The position reference trajectory (in meters) is \eqn{p_x^r(t) = \begin{cases}
  260 + 20t, \quad &0 \le t < 3 \\
  260 + 20t - 0.5(t-t_1)^2, \quad &3 \le t < 23 \\
  0, \quad &t_2 \le t
\end{cases},} and $p_z^r(t) = 100$ if $0\le t < 3$ or $p_z^r(t) = 100 - 0.75t$ if $t \ge 3$. A $\tanh$ smoothing function is applied to switch between components of the trajectory. This reference trajectory is designed to be dynamically feasible with respect to the control input limits. 

The pitch angle reference (in degrees) is $\theta^r(t) = 0$ if $0 \le t < 23$ or $\theta^r(t) = 15$ if $t \ge 23$. The objective function $J(t,\bar u) = \frac{1}{2}u^\top Q u + 25(\theta^c - \theta^r(t))^2 - 0.01\sum_{i=1}^{5}\log((-\bar u_i^{\min} + \bar u_i)(\bar u_i^{\max} - \bar u_i)),$ where $Q = \textrm{diag}(10,10,1,0.1)$ and penalizes the vertical propeller input much more than the pusher propeller and the elevator.

The reference gain for the position system is $K_{r,p} = \bmat{0.5 & 0.707}$ and introducing the time-scale separation $\epsilon$, the reference gain for the pitch system is $K_{r,\theta} = \bmat{0.5\epsilon^{-2} & 0.707\epsilon^{-1}}$. The state predictor gain is $K_s = \bmat{5 & 5}$ for both the slow and fast systems.

The adapted parameters are nine aerodynamic parameters: $C_{D_e}$, $C_{D_\alpha}$, $C_{D_t}$, $C_{L_e}$, $C_{L_\alpha}$, $C_{M_e}$, $C_{M_\alpha}$, $C_{M_\omega}$, and a modification of $C_{pr}$ using a simplified model, which introduces approximation error. Additionally, we omit the $C_{L_\omega}$ component of lift in the approximated dynamics to further add approximation error. The adaptation gain $\Gamma_W$ is chosen as $0.1 I_9$ and the state prediction error adaptation gain $\Gamma_e$ is $100 I_6$. The optimization gain $\Gamma_{u\lambda}$ is set to $50 I_8$.

The first and second order derivatives of the Lagrangian are computed with automatic differentiation in Julia. The update laws for $\widehat W$, $\bar u$, and $\lambda$ are run at 100 Hz using a first-order Euler discretization for solving the ODEs.

All parameters are initialized at 50\% above their true values and $\epsilon = 0.2$. \Cref{fig:input,fig:params} show the control inputs and parameters adapting during the automatic transition between forward flight and hover. The vertical propeller inputs are initially zero during cruise flight because the \ac{VTOL} wings generate sufficient lift to track the vertical reference. As the \ac{VTOL} slows, the vertical propeller inputs increase to their steady-state hover value as the aerodynamic lift fades due to low airspeed. During the transition, the elevator loses effectiveness as well. The pitch command does not follow the reference pitch command until the hover is reached, showing the switch from being underactuated during the transition to being fully-actuated in the hover. The parameter error is shown to be bounded and decreasing during the landing, but parameter error is not guaranteed to go to zero because persistency of excitation is not assumed.

\begin{figure}[t]
  \centering
  \includegraphics[width=\linewidth]{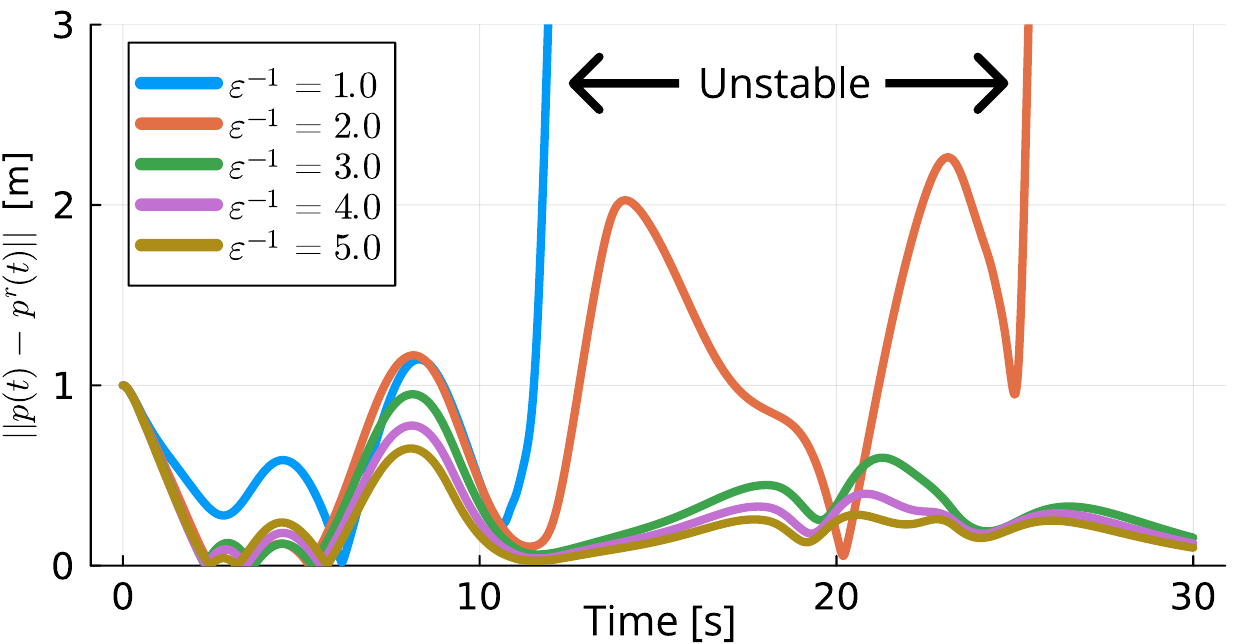}
  \caption{Slow state tracking error vs. time-scale separation $\epsilon$.}
  \label{fig:eps_range}
  \vspace{-7pt}
\end{figure}

To demonstrate the effect of the time-scale separation $\epsilon$, we simulated the landing scenario with five values of $\epsilon$ ranging from 1.0 to 0.125, with an initial vertical position error of 1~m. Maximum error is defined as $\sup_{t\ge2}\norm{p(t)-p^r(t)}$, which is when the initial transient is finished. As expected, smaller $\epsilon$ results in lower tracking error, as seen in \Cref{fig:eps_range}. At larger values of $\epsilon$, the closed-loop system becomes unstable. \Cref{fig:eps_range} and \Cref{tab:eps} confirm the conclusion of \Cref{th:tikh} that the tracking error is reduced as $\epsilon$ shrinks.
\begin{table}
\centering
\caption{Maximum tracking error as $\epsilon \to 0$ }
\begin{tabular}{c||c|c|c|c|c}
\hline
  $\boldsymbol{\epsilon}$ & 1.0 & 0.5 & 0.33 & 0.25 & 0.2 \\ \hline 
  \textbf{max. error [m]} & $\infty$ & $\infty$ & 0.951 & 0.777 & 0.650 \\
  \hline
\end{tabular}
\label{tab:eps}
\vspace{-15pt}
\end{table}

Overall, the \ac{VTOL} is able to cruise, transition, and hover using the proposed adaptive control allocation algorithm, with \emph{no} gain scheduling, blending, or mode switching.


\section{Conclusion}
We have presented an adaptive control framework for uncertain, non-affine in control, underactuated nonlinear systems. Our approach leverages a time-scale separation property together with dynamic control allocation. Using singular perturbation theory and Lyapunov analysis, we established stability guarantees and bounded tracking error of a reference trajectory. Finally, we demonstrated the effectiveness of the proposed method on a \ac{VTOL} quadplane with highly nonlinear dynamics and state-dependent levels of actuation.

Future work could involve adaptation to time-varying external disturbances, potentially applying modern machine learning paradigms such as meta learning. Another potential direction is using adaptation algorithms that guarantee exponential stability without persistence of excitation, removing the state predictor and simplifying the architecture. 

{
\setstretch{0.95}
\bibliographystyle{IEEEtran}
\bibliography{biblio, IEEEabrv}
}

\end{document}